\newtheorem{theorem}{Theorem}[section]
\def\@biblabel#1{}
\theoremstyle{plain}
\theoremstyle{definition}
\newtheorem{example}{Example}
\theoremstyle{remark}
\newcommand{\ex}{{\sf Exp}}
\newcommand{\gam}{{\sf Gamma}}
\title{Instantaneous and limiting behavior of an $n$-node blockchain under cyber attacks from a single hacker}
\author{Xiufeng Xu$^*$\qquad Liang Hong\footnote{Department of Mathematical Sciences, The University of Texas at Dallas, 800 West Campbell Road, Richardson, TX 75080, USA.}}
\date{\today}
\begin{document}

\maketitle

\begin{abstract}
We investigate the instantaneous and limiting behavior of an $n$-node blockchain which is under continuous monitoring of the IT department of a company but faces non-stop cyber attacks from a single hacker. The blockchain is functional as far as no data stored on it has been changed, deleted, or locked. Once the IT department detects the attack from the hacker, it will immediately re-set the blockchain, rendering all previous efforts of the hacker in vain.  The hacker will not stop until the blockchain is dysfunctional. For arbitrary distributions of the hacking times and detecting times, we derive the limiting functional probability, instantaneous functional probability, and mean functional time of the blockchain. We also show that all these quantities are increasing functions of the number of nodes, substantiating the intuition that the more nodes a blockchain has, the harder it is for a hacker to succeed in a cyber attack.

\smallskip

\emph{Keywords and phrases:} cyber ransom attacks; cyber destructive attacks; instantaneous functional probability; limiting functional probability; mean functional times; operations research.
\end{abstract}

\section{Introduction}
In today's society,  exchanges of information between companies and customers take various forms via the Internet. 
This mode of interaction exposes a company to potential cyber attacks.  While shutting down the computer network is not a viable option for any business,   the loss due to a cyber attack could be astronomical.  For example,  the Colonial Pipeline ransomware attack in 2021 costed the company 4.4 million dollars in ransom payment, let alone many other indirect costs.  Quite a few other costly cyber attacks are still as fresh as in recent memory.  Therefore,  a company faces the challenge of safeguarding itself from cyber attacks and maintaining its normal operations at the same time.  One plausible approach is to back up a company's data on a continuous basis. However, such a practice is prohibitively expensive to most companies.  The blockchain is one efficient alternative that can  greatly reduce the likelihood of cyber losses.  Generally speaking, a blockchain consists of several identical computers in a network; each of these computers is called a \emph{node} and stores crucial data for a company's daily operation.  Different from a traditional computer network where data is housed on a centralized server,  the blockchain stores identical data on each of its nodes, and these nodes continuously verify their data according to the majority rule. That is,  any piece of data on a node is considered valid only if it is consistent with its counterparts on at least half of the nodes of the blockchain.  Thanks to the public key encryption (e.g.  Diffie and Hellman 1976, Rivest et al. 1978, Goldwasser and Micali 1982, Goldwasser et al. 1988), it is basically useless for a hacker to steal the data stored on the blockchain.  Therefore, most hackers would either make a malicious attack such as changing or deleting the data, or make a ransom attack by locking data on the blockchain and demand a ransom from the company.  For the former type of cyber attack, which will be referred to as a \emph{destructive attack},  a hacker must hack into at least half of the nodes before he can execute his plan. For the latter type of cyber attack, which will be called a \emph{ransom attack},  a hacker must hack into all nodes and lock the entire blockchain.  Intuitively, a blockchain is much harder to hack than a traditional centralized server; see Section~3.3 for a rigorous result.  No wonder blockchain technology has been used in a wide range of applications such as cryptocurrencies, smart contracts,  and P2P insurance; for a comprehensive review of elements of blockchains and their applications, see Tama et al. (2017),  Zheng et al. (2018),  Casino et al. (2019), Kumar et al. (2020),  Yang et al. (2020), Lu et al. (2020), and Malik et al. (2021), and references therein.  

Despite the enhanced cyber security it provides, a blockchain is not immune to cyber attacks. In particular, it is still possible for a hacker to breach the security of a blockchain.  Various authors  have studied security issues of a blockchain; see, for instance, Khan and Salah (2017), Li et al. (2017), and Meng eta al. (2018).  However, no paper in the literature has investigated the operations-research-theoretic aspect of a blockchain when it is under cyber attacks.  The current paper intends to serve as the first step towards filling this lacuna in the literature.  In this article, we study an $n$-node blockchain under the cyber attack from one hacker when it is under continuous monitoring of the IT department of a company.  The blockchain is considered to be functional as long as no data stored on it has been modified, deleted, or locked.  However, the IT department will re-set the blockchain if it detects the attack from the hacker; if this occurs,  all previous efforts of the hacker will be totally forfeited and the hacker will start over again.  The hacker will not stop as far as the blockchain remains functional. Assuming the detecting time and the hacking time follow arbitrary distributions, we obtain the limiting functional probability, instantaneous functional probability, and mean functional time of the blockchain. We also show that all these quantities are a non-decreasing function of the number of nodes of the blockchain.  To our knowledge, this is the first time that rigorous results have been given to back up the intuition  that the more nodes a blockchain has, the harder it is for a hacker to break into.

The remainder of the paper is organized as follows.  In Section~2, we give a detailed description of our model and establish notational convention.  Next, in Section~3, we investigate the stochastic behavior of our proposed blockchain model. In particular, we derive formulas for the limiting functional probability,  the instantaneous functional probability, and the mean functional time of the blockchain. We also establish rigorous results to confirm the intuition that the more nodes a blockchain has, the less likely it will be hacked.  In Section~4, we provide several numerical examples based on simulation.  Finally, we conclude the paper with some remarks in Section~5. The Appendix contains some technical details.  The python code for all examples in this paper can be found at  the following website: \url{https://github.com/xuxiufeng/Blockchain_Simulations}. 

\section{Model setup and notation}

We consider a blockchain consisting of $n$ identical nodes in a given company. Without loss of generality, we assume that $n\geq 2$. From the moment it starts to operate, the blockchain is under 24-hour continuous monitoring by the IT department of the company for any potential cyber attack. There are two major types of cyber attacks for blockchains: (i) \emph{cyber destructive attacks} in which a hacker invades the blockchain and change or/and delete the data stored on the blockchain and (ii) \emph{cyber ransom attacks} where a hacker intrudes into the blockchain, locks the data stored the blockchain,  and demands ransoms from the company. In this paper, we are mainly interested in the instantaneous and limiting behavior of the $n$-node blockchain under these two types of cyber attacks from one single hacker. We assume that blockchain is under immediate cyber attacks from the hacker the moment it starts to operate. For a hacker to change or delete the data on an $n$-node blockchain, he must hack into at least $m=\lfloor n/2 \rfloor+1$ nodes. Without loss of generality, we assume that a hacker will start the malicious attack once he succeeds in hacking into $m$ nodes, and that he hacks these $m$ nodes in sequential order. That is, once the hacker penetrates the firewall of one node, he will hold onto it and immediately start hacking the next node. However, if the IT department of the company detects that the blockchain is under attack, they will immediately re-set the whole blockchain so that all previous efforts of the hacker are forfeited. Therefore, the hacker will not be able to change or delete data on the blockchain until he hacks into the $m$-th node without being detected. We assume that the hacker will perform a  cyber destructive attack the moment he succeeds in hacking into $m$ nodes without being detected. Similarly, if the hacker intends to conduct a cyber ransom attack, he will do so immediately after he hacks into the $n$-th node without being detected. On the other hand, each time the IT department re-sets the blockchain, there will be associated costs due to disruption of business operation. For this reason, the IT department will only re-set the blockchain when they detect that a cyber attack against the blockchain is under way.

Without loss of generality, we label the $n$ nodes of the blockchain as node $1$, nodes $2$, $\ldots$, node $n$. We assume that it takes a random amount of time $X_i$ for the hacker to hack into node $i, (i=1, \ldots, n)$, and that $X_1, \ldots, X_n$ are independent and identically distributed (iid) according to a distribution function $F_X$ with a probability density function $f_X(x)$. We also assume that it takes a random amount of time $Y$ for the IT department to detect that the blockchain is under a cyber attack from the moment it starts to operate or has just been  re-set, that $Y$ has a distribution function $F_Y$ with a probability density function $f_Y(y)$, and that $Y$ is independent from $X_1, \ldots, X_n$. We call each $X_i$ a \emph{hacking time} and $Y$ a \emph{detecting time}.

The stochastic behavior of the blockchain under a cyber destructive attack can be described as follows. At time $t=0$, the blockchain is put into operation and immediately under continuous monitoring for cyber attacks. At the same time, a hacker starts to work on bypassing the firewall of a node. Since all $n$ nodes are identical, we may assume that the hacker attacks the nodes of the blockchain according to the ascending order of their labels.  More specifically, the hacker attacks node $1$ first. If $X_1<Y$, i.e., the hacker penetrates the firewall of node $1$ without being detected, then he will immediately attack node $2$. On the other hand, if $X_1 \geq Y$, i.e., the hacker fails to bypass the firewall of node $1$ before being detected, the blockchain will be re-set and he has to start over. In general, if $\sum_{i=1}^m X_i <Y$, i.e., the hacker is able to hack into $m$ nodes before being detected, he will perform a cyber destructive attack. But, if $\sum_{i=1}^m X_i\geq Y$, i.e., he is being detected before he is able to change any data on the blockchain, the blockchain will be re-set, rendering  all previous efforts of the hacker in vain. As far as the hacker has not been able to change or delete any data on the blockchain, we say the blockchain is \emph{functional}. The stochastic behavior of the blockchain under a cyber ransom attack can be described in a similar manner by replacing $m$ with $n$. Therefore, we will focus only on the case of a cyber destructive attack.  However, all results extend to the case of a cyber ransom attack if we replace $m$ with $n$.  

We are mainly interested in the limiting behavior and the instantaneous behavior of an $n$-node blockchain in the aforementioned setup. Recall that $m=\lfloor n/2\rfloor+1$.  For a given $m$, we put

\begin{equation}
\label{eq:limit}
P_m(\infty)=P\{\text{the  blockchain will never be hacked}\}.
\end{equation} 
We say $P_m(\infty)$ is the \emph{limiting functional probability} of an $n$-node blockchain. We also define
$T_m$ to be the total amount of time it takes for the hacker to hack into the blockchain, and call $E[T_m]$ the \emph{mean functional time} of the blockchain.  Moreover, for $t>0$, we define

\begin{equation}
\label{eq:instantaneous}
P_m(t)=P\{\text{the blockchain has not been hacked at time $t$}\}.
\end{equation}
We call $P_m(t)$ the \emph{instantaneous functional probability} of the blockchain. It follows from the continuity of the probability measure (e.g.  Kallenberg 2002, Lemma~1.14) that the limiting functional probability and the instantaneous functional probability can be related as follows

\[P_m(\infty)=\lim_{t\rightarrow \infty} P_m(t).\]

\section{Stochastic behavior of the proposed blockchain model}
\subsection{Limiting functional probability and mean functional time}

To investigate the limiting behavior of the blockchain, we define a \emph{cycle} to be the period from the moment the blockchain starts afresh or has just been re-set to the next moment it is being re-set. That is, a cycle is a period from the inception of a detecting time to its end, provided the blockchain has not been hacked. Let $N_1$ be the number of cycles needed for the hacker to succeed in hacking into the blockchain. Then $N_1$ is a geometric random variable\footnote{Here we interpret a geometric random variable with parameter $0<p<1$ to be the number of failures (rather than the number of trials) until the first success among a sequence of independent Bernoulli trials with success probability $p$.} with parameter $p_m$ given by
\begin{equation}
\label{eq:detectprob}
p_m=P\left\{ Y>\sum_{i=1}^m X_i \right\}=\int_0^{\infty} F_X^{(m)}(s) dF_Y(s) ,
\end{equation}
where $F_X^{(m)}$ is the $m$-folded convolution of $F_X$. Note that $p_m>0$ since we assume both $F_X$ and $F_Y$ admit a probability density function. If we restrict our attention to the epochs the blockchain starts afresh, or has just been re-set, or has just been hacked, then we obtain an embedded discrete-time process. This discrete-time process is a two-state discrete-time Markov chain. In State 1, the blockchain is functional, but in State 2, the blockchain is dysfunctional, i.e., it has been hacked. A diagram of this two-state discrete-time Markov chain is given in Figure 1.\\

\begin{center}
	\begin{tikzpicture}[->, >=stealth', auto, semithick, node distance=4cm]
	\tikzstyle{every state}=[fill=white,draw=black,thick,text=black,scale=2]
	\node[state]    (A)                     {$1$};     
	\node[state]    (B)[right of=A]   {$2$};
	\path
	(A) edge[loop above]			node{$1-p$}	(A)
	       edge[bend left,above]	node{$p$}	(B)
	(B) edge[loop above]          node{$1$}(B);   
	\end{tikzpicture}          
\end{center}
\begin{center}
	Figure 1: Two-state discrete Markov chain
\end{center}
Note that State 2 is an absorbing state. It is clear that the transition probability from State 1 to State 2 is exactly the probability $p_m$ given by (\ref{eq:detectprob}). Conditioning on the outcome of the first cycle, we have
\[P_m(\infty)=p_m\times 0+(1-p_m)\times P_m(\infty), \]
implying that $P_m(\infty)=0$, which is in agreement with our intuition: if we observe the stochastic behavior of the blockchain over an infinite horizon of time, then it will eventually be hacked because $p_m>0$.

Next, we look at the mean functional time of the blockchain. To this end, we focus only on the moments every possible cycle starts or ends.
By doing so, we identify a renewal process $\{N_2(t)\}_{t\geq 0}$ where
\[N_2(t)= \max\left\{k: \sum_{i=1}^k Y_i\leq t \right\},\]
and $Y_i\overset{d}{=}Y\mid Y<\sum_{i=1}^m X_i$.
\begin{equation}
\label{eq:totaltime}
T_m=\sum_{i=1}^{N_1} Y_i +\left(\sum_{i=1}^m X_i  \middle|\ \sum_{i=1}^m X_i<Y\right).
\end{equation}
Also, it is clear that $N_1$ is independent of $Y_{n+1}, Y_{n+2}, \ldots$, which shows $N_1$ is a stopping time for the sequence
$\{Y_i\}_{i\geq 1}$. Therefore, Wald's Identity implies that
\begin{eqnarray}
\label{eq:meantime}
E[T_m] &=& E[N_1] E[Y_1]+E \left[ \sum_{i=1}^m X_i \middle|\ \sum_{i=1}^m X_i<Y\right ] \nonumber \\
     &=& E[N_1] E\left [Y\middle|\ Y<\sum_{i=1}^m X_i \right]+E \left[ \sum_{i=1}^m X_i \middle|\ \sum_{i=1}^m X_i<Y\right ].
\end{eqnarray}

\subsection{Instantaneous functional probability}

To obtain the instantaneous functional probability of the blockchain,  we let $S_t$ be the moment the last time the blockchain is re-set before time $t$:
\[S_t=\sum_{i=1}^{N_2(t)} Y_i,\]
and $F_{S_t}$ be its distribution function. Then for $t>0$, we have
\begin{eqnarray*}
P_m(t) &=& P\{\text{the blockchain is functional at $t$}\mid S_t=0\}P\{S_t=0\}\\
     & & +\int_0^t P\{ \text{the blockchain is functional at $t$}\mid S_t=s\}dF_{S_t}(s)\\
     &=& P\{t<\sum_{i=1}^m X_i \wedge Y \}+\int_0^t P\{ \text{the blockchain is functional at $t$}\mid S_t=s\} dF_{S_t}(s).
\end{eqnarray*}
For $0<s<t$, we have
\[P\{ \text{the blockchain is functional at $t$}\mid S_t=s\}=P\{Y \wedge \sum_{i=1}^m X_i>t-s \mid Y_1>t-s\}, \]
and $dF_{S_t}(s)=P\{Y_1>t-s\}dG(s)$ where $G(s)=\sum_{i=1}^{\infty} F_{Y_i}(s)$.\\

Therefore,
\begin{equation}
\label{eq:instantaneous2}
P_m(t) = P\{t<\sum_{i=1}^m X_i \wedge Y \}+\int_0^t P\{Y \wedge \sum_{i=1}^m X_i>t-s \} dG(s).
\end{equation}
It is clear that the first term in (\ref{eq:instantaneous2}) goes to $0$ as $t\rightarrow \infty$.
For the second term, we have
\begin{eqnarray*}
0 &\leq & \lim_{t \rightarrow \infty} \int_0^t P\{Y \wedge \sum_{i=1}^m X_i>t-s \} dG(s)\\
   &=&     \lim_{t \rightarrow \infty} \int_0^{\infty} P\{Y \wedge \sum_{i=1}^m X_i>t-s \}1_{[0, t]}(s) dG(s)\\
   &\leq & \lim_{t \rightarrow \infty} \int_0^{\infty} P\{Y \wedge \sum_{i=1}^m X_i>t-s\} dG(s)\\
   &= &  \int_0^{\infty} \lim_{t \rightarrow \infty} P\{Y \wedge \sum_{i=1}^m X_i>t-s \} dG(s)=0, \\
\end{eqnarray*}
where the last equality follows from the monotone convergence theorem for sequences of decreasing measurable functions. This gives another proof that $P_m(\infty)=0$.

\subsection{$P_m(t)$ and $E[T_m]$ as functions of $m$}

Intuitively, the more nodes a blockchain has, the more secure it is.  That is, $P_n(t)$,  $P_n(\infty)$, and $E[T_n]$ should all  be non-decreasing functions of the number of nodes $n$, and they should be increasing functions of $n$ if the increment of the number of nodes is at least $2$. The next theorem confirms this intuition.

\begin{theorem}
\label{theorem3.1}
$P_m(t)$,  $P_m(\infty)$, and $E[T_m]$ are all  increasing functions of  $m$, where $m=\lfloor n/2\rfloor+1$. 
\end{theorem}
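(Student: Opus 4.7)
The plan is a single coupling argument that yields a stronger pathwise statement: on a common probability space, the time $\tau_{m+1}$ at which the $(m+1)$-blockchain first becomes dysfunctional dominates the corresponding $\tau_m$ of the $m$-blockchain almost surely. All three conclusions then fall out immediately. Note that since $P_m(\infty)=0$ has already been established for every $m$, the assertion for $P_m(\infty)$ amounts to the trivial statement that the constant zero sequence is non-decreasing; the real content concerns $P_m(t)$ and $E[T_m]$.

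To couple, I would introduce on a single probability space a doubly-indexed iid family $\{X_{i,j}\}$ with distribution $F_X$ (the time to hack node $i$ in cycle $j$) together with an independent iid sequence $\{Y_j\}$ with distribution $F_Y$ (the detecting time in cycle $j$), and then run both the $m$-process and the $(m+1)$-process driven by this same input. The key cycle-by-cycle observation is that a cycle fails for the $m$-blockchain exactly when $\sum_{i=1}^m X_{i,j}\geq Y_j$, and this inequality forces $\sum_{i=1}^{m+1}X_{i,j}\geq Y_j$, so the same cycle also fails for the $(m+1)$-blockchain and, crucially, with the same length $Y_j$. Hence the two processes remain perfectly synchronized in wall-clock time up to and including the first cycle $J$ in which the $m$-blockchain succeeds.

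At cycle $J$, either the $(m+1)$-blockchain also succeeds, in which case $\tau_{m+1}-\tau_m=X_{m+1,J}\geq 0$, or it fails, in which case that cycle has length $Y_J>\sum_{i=1}^m X_{i,J}$ and the $(m+1)$-process continues beyond $\tau_m$, giving $\tau_{m+1}>\tau_m$. Either way $\tau_{m+1}\geq\tau_m$ almost surely. From this pathwise bound I obtain $P_{m+1}(t)=\prob(\tau_{m+1}>t)\geq \prob(\tau_m>t)=P_m(t)$ and $E[T_{m+1}]=E[\tau_{m+1}]\geq E[\tau_m]=E[T_m]$ by monotonicity of expectation, while the $P_m(\infty)$ statement is the triviality noted above.

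The main pitfall is keeping the two processes honestly on the same clock: I must verify that every detection event strictly before $\tau_m$ is also a detection event for the $(m+1)$-process at the same calendar time, so that cycle $J$ begins simultaneously for both and the final comparison is genuinely local to a single cycle. Once that synchronization is in place the remainder is bookkeeping; no manipulation of the convolution integrals $F_X^{(m)}$ or of the geometric parameter $p_m$ is needed, which is attractive because direct comparison of $E[T_m]$ through formula~(\ref{eq:meantime}) would entangle the $m$-dependent quantities $E[N_1]$, $E[Y\mid Y<\sum_{i=1}^m X_i]$, and $E[\sum_{i=1}^m X_i\mid\sum_{i=1}^m X_i<Y]$ in a way that is not obviously monotone term by term.
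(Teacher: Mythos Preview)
Your coupling argument is correct and is a genuinely different route from the paper's. The paper proceeds by direct manipulation of the formulas: for $P_m(t)$ it compares the two integrands in~(\ref{eq:instantaneous2}) using $X_{m+1}>0$ a.s., and for $E[T_m]$ it rewrites~(\ref{eq:meantime}) as $p_m^{-1}\big(E[Y1_{\{Y<\sum X_i\}}]+E[(\sum X_i)1_{\{Y>\sum X_i\}}]\big)$ and then expands $E[T_{m+1}]-E[T_m]$ into several pieces, each shown nonnegative using $p_m>p_{m+1}$. Your pathwise domination $\tau_{m+1}\ge\tau_m$ under the shared input $\{X_{i,j}\},\{Y_j\}$ yields both monotonicity statements in one stroke, without ever touching the convolution $F_X^{(m)}$ or the geometric parameter $p_m$. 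This is cleaner and, notably, it sidesteps a delicate point in the integral formula: the renewal measure $dG(s)$ in~(\ref{eq:instantaneous2}) itself depends on $m$ (since $Y_i\overset{d}{=}Y\mid Y<\sum_{i=1}^m X_i$), so comparing integrands alone does not quite finish the job there, whereas your coupling is indifferent to this. What the paper's computation buys, on the other hand, is an explicit lower bound for $E[T_{m+1}]-E[T_m]$ in terms of identifiable expectations, which could be useful quantitatively.

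Two small remarks. First, since $X_{m+1,J}>0$ a.s.\ (densities are assumed), your coupling in fact gives $\tau_{m+1}>\tau_m$ a.s., which immediately yields the \emph{strict} inequality $E[T_{m+1}]>E[T_m]$; for strict inequality in $P_m(t)$ at a fixed $t>0$ you need the one-line observation that $\prob(\tau_m\le t<\tau_{m+1})>0$, which follows from the positivity of the densities. Second, your handling of $P_m(\infty)$ is exactly right: since $P_m(\infty)=0$ for every $m$, that part of the statement is vacuous.
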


\begin{proof}
First, it is obvious that 
\begin{equation*}
P\{t<Y \wedge \sum_{i=1}^{m+1}X_i \}> P\{t<Y \wedge \sum_{i=1}^{m}X_i\},
\end{equation*}
and
\begin{equation*}
P\{t-s<Y \wedge \sum_{i=1}^{m+1}X_i \}> P\{t-s<Y \wedge \sum_{i=1}^{m}X_i\}
\end{equation*}
because each hacking time $X_i$ is positive almost surely.  Therefore, (\ref{eq:instantaneous2}) implies
 \begin{equation}
\label{eq:instantaneous3}
 	P_{m+1}(t)> P_m(t), \quad \text{for all positive integer $m$}.
 \end{equation}

Next, we write (\ref{eq:meantime}) as 
\begin{eqnarray*}
E[T_m] &=& \frac{1-p_m}{p_m}\times \frac{E[Y1_{\{Y<\sum_{i=1}^m X_i \}}]}{1-p_m}+\frac{E[(\sum_{i=1}^m X_i)1_{\{Y>\sum_{i=1}^m X_i\}}]}{p_m}\\
&=& \frac{E[Y1_{\{Y<\sum_{i=1}^m X_i\}}]}{p_m}+\frac{E[(\sum_{i=1}^m X_i)1_{\{Y>\sum_{i=1}^m X_i\}}]}{p_m}.
\end{eqnarray*}
Therefore, 
\begin{eqnarray*}
 E[T_{m+1}]-E[T_m] 
&=&  (p_mp_{m+1})^{-1} \bigg \{ p_m E[Y1_{\{Y<\sum_{i=1}^{m+1} X_i \}}]-p_{m+1}E[Y1_{\{Y<\sum_{i=1}^{m} X_i \}}] \\
& &+p_m E\bigg[\bigg(\sum_{i=1}^{m+1} X_i\bigg)1_{\{Y>\sum_{i=1}^{m+1} X_i\}}\bigg]-p_{m+1}E\bigg[\bigg(\sum_{i=1}^m X_i\bigg)1_{\{Y>\sum_{i=1}^m X_i\}}\bigg] \bigg\}.
\end{eqnarray*}
The four terms in the brackets can be written as 
\begin{eqnarray*}
& & p_m E[Y1_{\{Y<\sum_{i=1}^m X_i \}}]-p_{m+1}E[Y1_{\{Y<\sum_{i=1}^{m} X_i \}}]+p_m E[Y1_{\{\sum_{i=1}^m X_i<Y<\sum_{i=1}^{m+1} X_i \}}] \\
& & + p_m E[X_{m+1}1_{\{Y>\sum_{i=1}^{m+1} X_i\}}]+ p_m E\bigg[\bigg(\sum_{i=1}^m X_i\bigg)1_{\{Y>\sum_{i=1}^{m+1} X_i\}}\bigg]\\
& &-p_{m+1} E\bigg[\bigg(\sum_{i=1}^m X_i\bigg)1_{\{Y>\sum_{i=1}^{m+1} X_i\}}\bigg] 
 -p_{m+1} E\bigg[\bigg(\sum_{i=1}^m X_i\bigg)1_{\{\sum_{i=1}^{m+1}X_i>Y>\sum_{i=1}^m X_i\}}\bigg]\\
&=& (p_m-p_{m+1})E[Y1_{\{Y<\sum_{i=1}^m X_i \}}]+(p_m-p_{m+1})E\bigg[\bigg(\sum_{i=1}^m X_i\bigg)1_{\{Y>\sum_{i=1}^{m+1} X_i\}}\bigg]\\
& &  +p_m E[Y1_{\{\sum_{i=1}^{m+1}X_i>Y>\sum_{i=1}^m X_i\}}]\ -p_{m+1} 
E\bigg[\bigg(\sum_{i=1}^m X_i\bigg)1_{\{\sum_{i=1}^{m+1}X_i>Y>\sum_{i=1}^m X_i\}}\bigg] \\
& & +p_m E[X_{m+1}1_{\{Y>\sum_{i=1}^{m+1} X_i\}}] \\
&\geq& (p_m-p_{m+1})E[Y1_{\{Y<\sum_{i=1}^m X_i \}}]+(p_m-p_{m+1})E\bigg[\bigg(\sum_{i=1}^mX_i\bigg)1_{\{Y>\sum_{i=1}^{m+1} X_i\}}\bigg]\\
& &   +p_m E\bigg[\bigg(\sum_{i=1}^m X_i\bigg)1_{\{\sum_{i=1}^{m+1}X_i>Y>\sum_{i=1}^m X_i\}}\bigg]\ -p_{m+1} 
E\bigg[\bigg(\sum_{i=1}^m X_i\bigg)1_{\{\sum_{i=1}^{m+1}X_i>Y>\sum_{i=1}^m X_i\}}\bigg]\\
& & +p_m E[X_{m+1}1_{\{Y>\sum_{i=1}^{m+1} X_i\}}]\\
&=& (p_m-p_{m+1})E[Y1_{\{Y<\sum_{i=1}^m X_i \}}]+(p_m-p_{m+1})E\bigg[\bigg(\sum_{i=1}^m X_i\bigg)1_{\{Y>\sum_{i=1}^{m+1} X_i\}}\bigg]\\
& &   +(p_m-p_{m+1}) E\bigg[\bigg(\sum_{i=1}^m X_i\bigg)1_{\{\sum_{i=1}^{m+1}X_i>Y>\sum_{i=1}^m X_i\}}\bigg] +p_m E[X_{m+1}1_{\{Y>\sum_{i=1}^{m+1} X_i\}}].
\end{eqnarray*}
It is evident from (\ref{eq:detectprob}) that $p_m$ is decreasing in $m$.  Therefore,  $ E[T_{m+1}]-E[T_m] >0$. 
\end{proof}

In view of the above results, one would expect it to be impossible for a blockchain to be hacked when the number of nodes $n\rightarrow \infty$.  The next theorem confirms this intuition.  (Note that $n\rightarrow \infty$ if and only if $m=\lfloor n/2\rfloor+1\rightarrow \infty$.)

\begin{theorem}
\label{theorem3.2}
$\lim_{m\rightarrow\infty}P_m(t)=1$ and $\lim_{m\rightarrow\infty}E[T_m]=\infty$, where $m=\lfloor n/2\rfloor+1$. 
\end{theorem}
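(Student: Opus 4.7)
The plan is to treat the two limits separately, leaning on the single fact that $\sum_{i=1}^{m} X_i \to \infty$ almost surely as $m \to \infty$. This holds because each $X_i$ has a density and is therefore a.s.\ positive, so $P(X_i > \varepsilon) > 0$ for some $\varepsilon > 0$; Borel--Cantelli then forces infinitely many $X_i$ to exceed $\varepsilon$. An immediate corollary is that $p_m = P(Y > \sum_{i=1}^m X_i) \to 0$ as $m \to \infty$, since $Y$ is a.s.\ finite.

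For the mean functional time, I would start from (\ref{eq:meantime}) and drop the non-negative second summand, obtaining $E[T_m] \geq ((1-p_m)/p_m)\, E[Y_1]$. Rewriting $E[Y_1] = E[Y \cdot 1_{\{Y < \sum_{i=1}^m X_i\}}]/(1-p_m)$ and invoking monotone convergence---the indicator $1_{\{Y < \sum_{i=1}^m X_i\}}$ is non-decreasing in $m$ and tends to $1$ almost surely---yields $E[Y \cdot 1_{\{Y < \sum_{i=1}^m X_i\}}] \uparrow E[Y] > 0$ while $1-p_m \to 1$. Hence $E[Y_1]$ is bounded below by a positive constant for all large $m$, and since $(1-p_m)/p_m \to \infty$, we conclude $E[T_m] \to \infty$.

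For the instantaneous functional probability, I would bound $1 - P_m(t) = P(T_m \leq t)$ directly. If the hacker succeeds by time $t$, the successful cycle must satisfy $\sum_{i=1}^m X_i \leq t$. For any fixed integer $K$, a union bound yields
\[
P(T_m \leq t) \;\leq\; K \cdot P\!\left(\sum_{i=1}^{m} X_i \leq t\right) + P(\text{more than } K \text{ cycles are completed by time } t).
\]
The first term tends to $0$ as $m \to \infty$ for each fixed $K$, again by the fact $\sum_{i=1}^m X_i \to \infty$ a.s. I would control the second term by observing that each cycle lasts at least $\min(X_1, Y)$, a strictly positive random variable with positive mean, so by the strong law of large numbers the probability that $K$ i.i.d.\ copies of $\min(X_1, Y)$ sum to at most $t$ tends to $0$ as $K \to \infty$, uniformly in $m$. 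Taking $\limsup_m$ first and then $K \to \infty$ forces $1 - P_m(t) \to 0$.

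The main technical point is the uniform-in-$m$ bound on the cycle count. What makes it work is the elementary observation that the per-cycle duration $\min(\sum_{i=1}^m X_i, Y)$ dominates $\min(X_1, Y)$ for every $m$, giving an $m$-free lower bound that decouples the limit in $K$ from the limit in $m$.
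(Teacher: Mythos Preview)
Your treatment of $E[T_m]\to\infty$ mirrors the paper's proof exactly: both drop the second summand in~(\ref{eq:meantime}), use monotone convergence to show $E[Y\mid Y<\sum_{i=1}^m X_i]\to E[Y]>0$, and exploit $E[N_1]=(1-p_m)/p_m\to\infty$.

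For $P_m(t)\to 1$ you take a genuinely different route. The paper simply invokes formula~(\ref{eq:instantaneous2}) together with the a.s.\ positivity of the $X_i$, implicitly letting $\sum_{i=1}^m X_i\to\infty$ inside both terms of that renewal representation; this is brief but glosses over the fact that the measure $G$ in~(\ref{eq:instantaneous2}) also depends on $m$ through the conditional cycle law $Y_i\stackrel{d}{=}Y\mid Y<\sum_{i=1}^m X_i$. Your argument bypasses~(\ref{eq:instantaneous2}) entirely: you bound $P(T_m\le t)$ by splitting on whether success occurs within the first $K$ attempts, and your key device---lower-bounding every cycle length by $\min(X_1,Y)$ regardless of $m$---produces an $m$-free tail that lets you send $m\to\infty$ and then $K\to\infty$ independently. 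This is more elementary and more self-contained than the paper's appeal to the renewal formula, and it makes the uniformity in $m$ explicit rather than implicit.
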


\begin{proof}
Since each $X_i$ is positive almost surely,  (\ref{eq:instantaneous2}) implies
\begin{equation}
\label{eq:limitofprob}
	\lim_{m \rightarrow \infty}P_m(t)=1. 
\end{equation}
Next, (\ref{eq:detectprob}) implies $p_m\rightarrow 0$ as $m\rightarrow\infty$. Therefore,  $E[N_1]=1/p_m-1\rightarrow \infty$ as $m\rightarrow\infty$.  It follows from (\ref{eq:meantime}) that 
\begin{eqnarray*}
\lim_{m \rightarrow \infty}E[T_m ]
&\geq&	\lim_{m \rightarrow \infty}E[N_1] E\left [Y\middle|\ Y<\sum_{i=1}^m X_i \right]\\
&=& \lim_{m \rightarrow \infty}E[N_1] \lim_{m \rightarrow \infty}\left[ \frac{E[Y1_{\{Y<\sum_{i=1}^m X_i\}} ]}{P\left\{ Y<\sum_{i=1}^m X_i \right\}}\right] \\
&=&\lim_{m \rightarrow \infty}E[N_1] E[Y]\\
&=& \infty.
\end{eqnarray*}
\end{proof}

\subsection{Cost-benefit analysis}

Suppose $C_1(m)$ is the cost incurred each time the blockchain is re-set, $C_2(m)$ is the cost per unit time for running  the blockchain,  and $R$ is the revenue earned per unit time by running the blockchain.  Note that $C_1(m)$ is incurred exactly once in each cycle because the blockchain is re-set exactly once in each cycle.  By Wald's Identity, the expected number of times $C_1(m)$ is incurred in one unit time equals $1/E[Y]$. Therefore,  the expected net revenue (i.e.  expected profit) per unit time, denoted as $E_m[NR]$, can be written as
\begin{equation}
\label{eq:revenue}
E_m[NR]=R-C_2(m)-\frac{C_1(m)}{E[Y]}.
\end{equation}
In practice, $C_2(m)$ is an increasing function of $m$, but $1/E[Y]$  is a decreasing function of $m$.  While an optimal $m$ that maximizes the net revenue per unit time does not necessarily exist,  one can always use (\ref{eq:revenue})   to perform a cost-benefit analysis.

\section{Examples}
\begin{example}[Exponential hacking- and exponential detecting times]
\label{example1}

Suppose $X_1,\ldots, X_n\overset{iid}{\sim} \ex(\lambda)$, $Y\sim \ex(\delta)$,  and $Y$ is independent from $X_1,  \ldots, X_n$, where $\ex(\lambda)$ and $\ex(\delta)$ denote the exponential distributions with rate $\lambda>0$ and $\delta>0$, respectively. That is,  $X\sim \ex(\lambda)$ if and only if its  probability density function is given by
\[f(x)=\lambda e^{-\lambda x}, \quad x>0.\] 
It is clear that  $\sum_{i=1}^m X_i \sim \gam(m, \lambda)$, where $\gam(m, \lambda)$ denotes the gamma distribution with shape parameter $m$ and rate parameter $\lambda$. That is, $X\sim \gam(\alpha, \lambda)$ if and only if its  probability density function is given by 
\[f(x)=\frac{\lambda^\alpha}{\Gamma(\alpha)}x^{\alpha-1}e^{-\lambda x}, \quad x>0,\]
where $\Gamma(\alpha)=\int_0^\infty t^{\alpha-1}e^{-t} dt$ is the Gamma function. 
It follows that
\begin{align*}
	E[T_m]=&\frac{\lambda^m \bigg(\int_0^\infty \int_y^\infty ye^{-(\delta y+\lambda s)}s^{m-1}dsdy+\int_0^\infty \int_s^\infty e^{-(\delta y+\lambda s)}s^m dyds\bigg)}{\int_0^\infty \gamma(m,\lambda s)e^{-\delta s}ds},\\
	P_m(t)=&e^{-\delta t}-\frac{1}{\Gamma(m)}\gamma(m, \lambda t)+\frac{1}{\Gamma(m)}\gamma(m, \lambda t)(1-e^{-\delta t}),\\
	&+\int_0^t e^{-\delta (t-s)}-\frac{1}{\Gamma(m)}\gamma(m, \lambda (t-s))+\frac{1}{\Gamma(m)}\gamma(m, \lambda (t-s))(1-e^{-\delta (t-s)})\\
	&d\sum_{i=1}^\infty \bigg\{\frac{1-e^{-\delta s}}{1-\int_0^\infty \frac{1}{\Gamma(m)}\gamma(m, \lambda w)\delta e^{-\delta w}dw}1_{0\leq s \leq\sum_{i=1}^{m}X_i}+1_{s >\sum_{i=1}^{m}X_i}\bigg\},
\end{align*}
where $\gamma(\alpha, \lambda) =\int_0^{\lambda}t^{\alpha-1}e^{-t}dt $ is the lower incomplete gamma distribution.  The Appendix contains the derivation of the above formulas.

\begin{figure}
\begin{center}
\subfigure[$E(T_m)$ as a function of $m$]{\centering \includegraphics[scale=0.45]{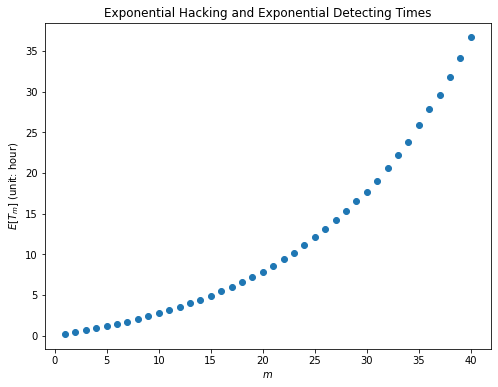}}
\subfigure[$P_m(5)$ as a function of $m$]{\centering \includegraphics[scale=0.45]{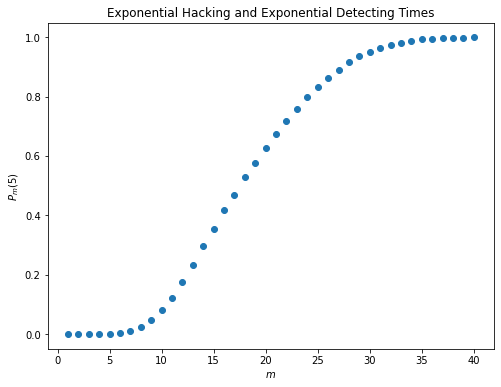}}
\end{center}
\caption{Figure 1: Plots of $E[T_m]$ and $P_m(5)$  under Exponential hacking- and exponential detecting times}
\label{fig:exp}
\end{figure}

Next, we perform a simulation study to estimate $E[T_m]$ and $P_m(t)$ as functions of $m$ and $t$ respectively.  For $\lambda=0.2$ and $\delta=3$, we perform $N_1=30, 000$ iterations to estimate each $E[T_m]$ for $m=1,2,...,40$. For each $i=1, 2, \ldots N_1$, we generate an independent random amount of time $T_m^i$ for a successful cyber destructive attack.  Then we estimate $E[T_m]$ as $\sum_{i=1}^{N_1} T_m^i/N_1$. Similarly, we perform $N_2=50,000$ iterations to estimate each $P_m(5)$ for $m=1,2,...,40$.  For each $j=1, 2, \ldots, N_2$, we set $W_j=1$ if the blockchain is functional at $t=5$ and $W_j=0$ otherwise. Then we take $\sum_{j=1}^{N_2}W_j/N_2$ to be the estimate of $P_m(5)$.  Figure~\ref{fig:exp} displays the values of $E[T_m]$ and $P_m(5)$ under Exponential hacking- and exponential detecting times for various values of $m$.  In both panels, the curve is non-decreasing in $m$.  Also,  $P_m(5)$ approaches $1$ as $m$ increases. These observations are all in agreement with Theorem~\ref{theorem3.1} and Theorem~\ref{theorem3.2}.
 
\begin{figure}
\begin{center}
	 \subfigure[$P_m(t)$ as a bivariate function of $m$ and $t$]{\centering \includegraphics[width=0.53\textwidth]{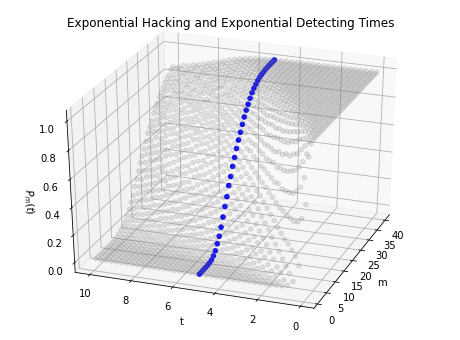}}
	 \subfigure[$E_m(NR)$ as a function of $m$]{\centering \includegraphics[width=0.46\textwidth]{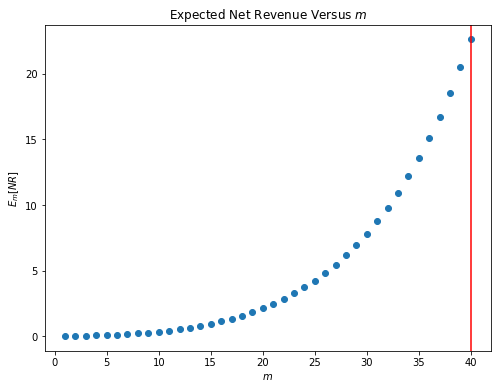}}
\end{center}
\caption{Figure~2: Plots of $P_m(t)$ and  $E_m(NR)$ under Exponential hacking- and exponential detecting times}
\label{fig:exp3D}
\end{figure}
Figure~\ref{fig:exp} only demonstrates the relationship between $P_m(t)$ and $m$ when $t=5$. What can we say when $t$ varies?  In Section~3.2, we established that $P_m(\infty)=0$. Intuitively,  as $t$ increases, $P_m(t)$ would decrease for any given $m$.  Similar to the above,  we perform a simulation with $N_3=10,000$ iterations to estimate each $P_m(t)$ for various values of $m$ and $t$.  Panel~(a) of Figure~\ref{fig:exp3D} provides a 3D scatter plot of $P_m(t)$ as a bivariate function of $m$ and $t$.  The blue scatter plot in Panel~(a) of Figure~\ref{fig:exp3D} is the 3D version of $P_m(t)$ versus $m$ when $t=5$,  i.e., the plot in Panel(b) of Figure~\ref{fig:exp} corresponds to the cross section of the 3D scatter plot in Panel~(a) of Figure~\ref{fig:exp3D}.  It is clear  that $P_m(t)$ is non-decreasing in $m$  for a given $t>0$ and non-increasing in $t$ for a given $m$.  That is, the larger $m$ and $t$ are, the less likely the blockchain will be hacked. 

For cost-benefit analysis, we let $R=10^{-2}e^{\sqrt{1.5m}}$, $C_1=10^{-3}m$, and $C_2=10^{-2}m$. Then the expected net revenue equals
\begin{align*}
	E_m[NR]&=10^{-2}e^{\sqrt{1.5m}}-10^{-2}m-\frac{10^{-3}m}{E[Y]}.
\end{align*}
A plot of the expected net revenue $E_m(NR)$ as a function of $m$ is given in Panel~(b) of Figure~\ref{fig:exp3D}.  In this case,  $E_m(NR)$ is an increasing function of $m$. Therefore,  one may not find the optimal $m$  that maximizes the expected net revenue.

\end{example}

\begin{example}[Gamma hacking- and gamma detecting times]
\label{example2}

Suppose $Y\sim \gam(\alpha, \beta)$, $X_1, \ldots, X_n \overset{iid}{\sim} \gam(\eta, \delta)$, and $Y$ is independent from $X_1, \ldots, X_n$.  It follows that $\sum_{i=1}^m X_i \sim \gam(m\eta, \delta)$.  Then
\begin{align*}
E[T_m]&=\frac{\delta^{m\eta}\bigg(\int_0^\infty \int_y^\infty y^{\alpha} e^{-(\beta y + \delta s)}s^{m\eta-1}dsdy+\int_0^\infty \int_s^\infty y^{\alpha-1}e^{-(\beta y+\delta s)}s^{m\eta} dyds\bigg )}{\int_0^\infty \gamma(m\eta, \eta s)s^{\alpha -1}e^{-\beta s}ds},
\end{align*}
and
\begin{align*}
	P_m(t)=&1-\frac{1}{\Gamma(m\eta)}\gamma(m\eta, \delta t)-\frac{1}{\Gamma(\alpha)}\gamma(\alpha, \beta t)+\frac{1}{\Gamma(m\eta)}\gamma(m\eta, \delta t)\frac{1}{\Gamma(\alpha)}\gamma(\alpha, \beta t)\\
	&+\int_0^t \bigg[ 1-\frac{1}{\Gamma(m\eta)}\gamma(m\eta, \delta (t-s))-\frac{1}{\Gamma(\alpha)}\gamma(\alpha, \beta (t-s))\\
	&+\frac{1}{\Gamma(m\eta)}\gamma(m\eta, \delta (t-s))\frac{1}{\Gamma(\alpha)}\gamma(\alpha, \beta (t-s)) \bigg]\\
	&d\sum_{i=1}^\infty \bigg\{\frac{\frac{1}{\Gamma(\alpha)}\gamma(\alpha, \beta s)}{1-\int_0^\infty \frac{1}{\Gamma(m\eta)}\gamma(m\eta,\delta w)\frac{\beta^\alpha}{\Gamma(\alpha)}w^{\alpha-1}e^{-\beta w}dw}1_{\{0\leq s \leq\sum_{i=1}^{m}X_i\}}+1_{\{s >\sum_{i=1}^{m}X_i\}}\bigg\}.
\end{align*}
The derivation of the above formulas is delegated to the Appendix. 

\begin{figure}
\begin{center}
\subfigure[$E(T_m)$ as a function of $m$]{\centering \includegraphics[scale=0.45]{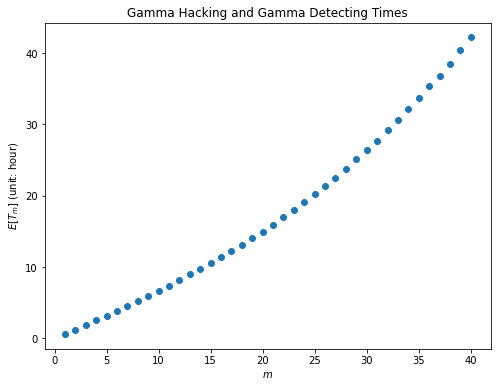}}
\subfigure[$P_m(4)$ as a function of $m$]{\centering \includegraphics[scale=0.45]{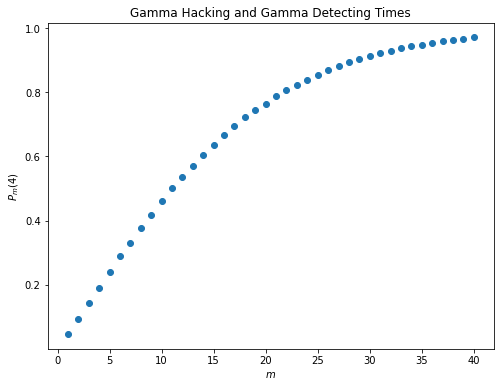}}
\end{center}
\caption{Figure 3: Plots of $E[T_m]$ and $P_m(4)$  under gamma hacking- and gamma detecting times}
\label{fig:gamma}
\end{figure}
As in Example~\ref{example1}, we perform a simulation study to estimate $E[T_m]$ and $P_m(t)$. For $\eta=0.05, \delta=15, \alpha=2,$ and $\beta=10$, we perform $N_1=80,000$ iterations to estimate each $E[T_m]$ for $m=1,2,...,40$. We generate an independent random amount of time $T_m^i$ for $i=1,2,\ldots,N_1$, and estimate $E[T_m]$ as $\sum_{i=1}^{N_1}T_m^i/N_1$. Similarly, with $N_2=150,000$ iterations, we estimate each $P_m(4)$ for $m=1,2,...,40$. For each $j=1,2,...,N_2$, we set $W_j=1$ if the blockchain is function at $t=4$ and $W_j=0$ otherwise. Then $P_m(4)$ is estimated by $\sum_{j=1}^{N_2} W_j/N_2$. Figure~\ref{fig:gamma} provides plots of $E[T_m]$ and $P_m(4)$ for various values of $m$.  These plots show that $E[T_m]$ and $P_m(4)$ are both increasing in $m$, and $P_m(4)$ approaches $1$ as $m$ increases.  

\begin{figure}
\begin{center}
	 \subfigure[$P_m(t)$ as a bivariate function of $m$ and $t$]{\centering \includegraphics[width=0.53\textwidth]{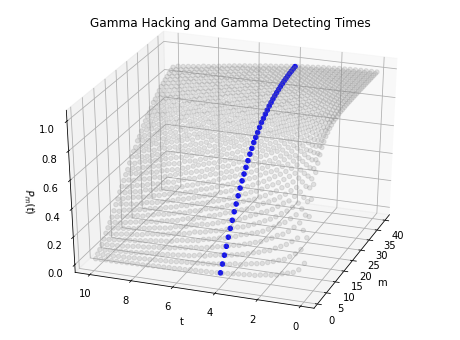}}
	 \subfigure[$E_m(NR)$ as a function of $m$]{\centering \includegraphics[width=0.46\textwidth]{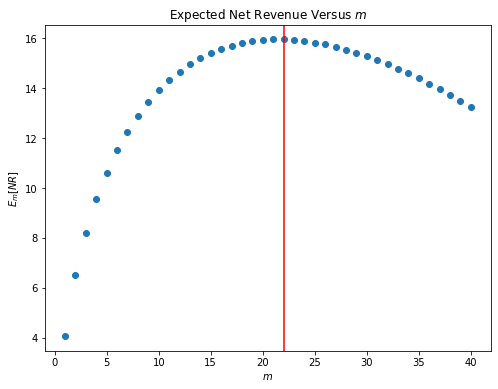}}
\end{center}
\caption{Figure 4: Plots of $P_m(t)$ and  $E_m(NR)$  under Gamma hacking- and gamma detecting times}
\label{fig:gamma3D}
\end{figure}

Panel~(a) of Figure~\ref{fig:gamma3D} provides a 3D plot of $P_m(t)$ as a bivariate function of $m$ and $t$ under gamma hacking- and gamma detecting times.  Similar to the previous example, the blue scatter plot in Figure~\ref{fig:gamma3D} is the 3D version (cross section) of $P_m(t)$ versus $m$ when $t=4$. It is clear that $P_m(t)$ is non-decreasing in $m$ for a given $t>0$ and non-increasing in $t$ for a given $m$.

As for cost-benefit analysis, we take $R=13\sqrt{0.6c}$, $C_1=m^{0.2}$, and $C_2=m$. Then the expected net revenue $E_m[NR]$ equals
\begin{align*}
	E_m[NR]&=13\sqrt{0.6c}-m-\frac{m^{0.2}}{E[Y]}.
\end{align*}
In Panel~(b) of Figure~\ref{fig:gamma3D}, the red vertical line indicates that the maximum expected net revenue is $15.95$ and is achieved when $m=22$.  Though $E[T_m]$ and $P_m(t)$ are non-decreasing in $m$,  a  large $m$ is not preferable from the point of view of cost-benefit analysis because an optimal $m$ exists in this case.
\end{example}

\begin{example}[Gamma hacking- and Weibull detecting times]
\label{example3}
In this example, we assume that $Y\sim\mathsf{Weibull}(\alpha, \beta)$. That is, $Y$ has the probability density function
\begin{align*}
	f(y)=\frac{\beta}{\alpha}\left(\frac{y}{\alpha}\right)^{\beta-1}e^{-\left(y/\alpha\right)^\beta},\quad y>0.
\end{align*}
We also assume that $X_1, \ldots, X_n \overset{iid}{\sim} \gam(\eta, \delta)$, and that $Y$ is independent from $X_1,  \ldots, X_n$.  Then $\sum_{i=1}^m X_i \sim \gam(m\eta, \delta)$.  Here we have
\begin{align*}
	E[T_m]=&\frac{\delta^{m\eta}\bigg(\int_0^\infty \int_y^\infty y^\beta e^{-\left(\left(y/\alpha\right)^\beta+\delta s\right)}s^{m\delta -1}dsdy+\int_0^\infty \int_s^\infty y^{\beta - 1}e^{-\left(\left(y/\alpha\right)^\beta+\delta s\right)}s^{m\delta}dyds\bigg)}{\int_0^\infty \gamma(m\eta, \eta s)s^{\beta - 1}e^{-\left(y/\alpha\right)^\beta}ds},\\
	P_m(t)=&e^{-(t/ \alpha)^\beta}-\frac{1}{\Gamma(m\eta)}\gamma(m\eta, \delta t)+\frac{1}{\Gamma(m\eta)}\gamma(m\eta, \delta t)(1-e^{-(t/ \alpha)^\beta})\\
	&+\int_0^t e^{-((t-s)/ \alpha)^\beta}-\frac{1}{\Gamma(m\eta)}\gamma(m\eta, \delta (t-s))+\frac{1}{\Gamma(m\eta)}\gamma(m\eta, \delta (t-s))(1-e^{-((t-s)/ \alpha)^\beta})\\
	&d\sum_{i=1}^\infty \bigg\{\frac{1-e^{-(s/ \alpha)^\beta}}{1- \int_0^\infty \frac{1}{\Gamma(m\eta)}\gamma(m\eta,\delta w)\left(\frac{\beta}{\alpha}\right)\left(\frac{w}{\alpha}\right)^{\beta-1}e^{-\left(w/\alpha\right)^\beta}dw}1_{\{0\leq s \leq\sum_{i=1}^{m}X_i\}}+1_{\{s >\sum_{i=1}^{m}X_i\}}\bigg\}.
\end{align*}
For the derivation of the above formulas, see the Appendix.

\begin{figure}
\begin{center}
\subfigure[$E(T_m)$ as a function of $m$]{\centering \includegraphics[scale=0.45]{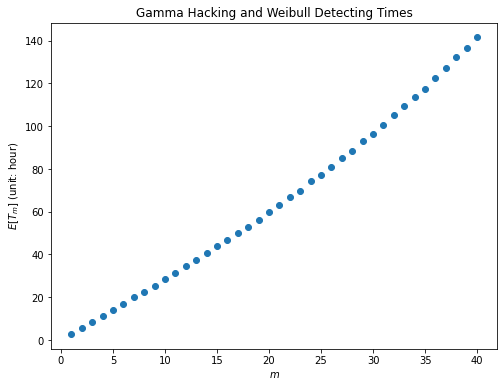}}
\subfigure[$P_m(3)$ as a function of $m$]{\centering \includegraphics[scale=0.45]{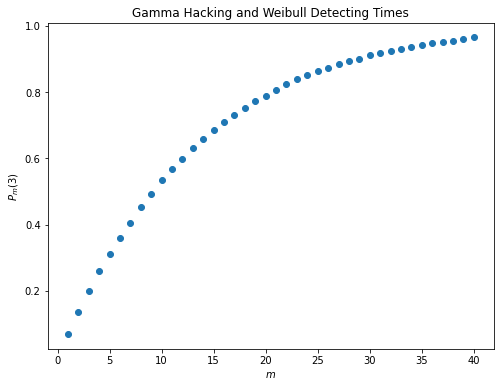}}
\end{center}
\caption{Figure 5: Plots of $E[T_m]$ and $P_m(3)$ under gamma hacking- and weibull detecting times}
\label{fig:weibull}
\end{figure}
Similar to Example~\ref{example1} and Example~\ref{example2}, we perform a simulation study to estimate $E[T_m]$ and $P_m(t)$.  We use the same method to estimate $E[T_m]$ and $P_m(t)$ as we did in the previous two examples except that here we take $N_1=130, 000$ and $N_2=80, 000$.  Figure~\ref{fig:weibull} plots $E[T_m]$ and $P_m(3)$ as functions of $m$.  

\begin{figure}
\begin{center}
	 \subfigure[$P_m(t)$ as a bivariate function of $m$ and $t$]{\centering \includegraphics[width=0.53\textwidth]{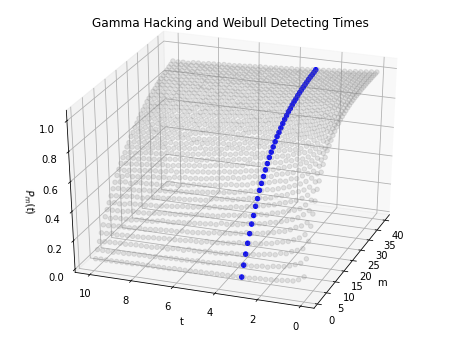}}
	 \subfigure[$E_m(NR)$ as a function of $m$]{\centering \includegraphics[width=0.46\textwidth]{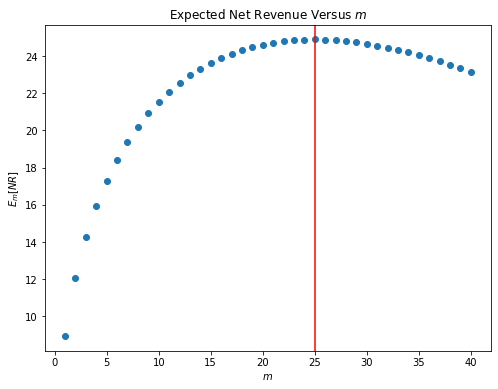}}
\end{center}
\caption{Figure 6: Plots of $P_m(t)$ and  $E_m(NR)$  under Gamma hacking- and Weibull detecting times}
\label{fig:weibull3D}
\end{figure}
Panel (a) of Figure~\ref{fig:weibull3D} provides a 3D plot of $P_m(t)$ as a bivariate function of $m$ and $t$, where the blue dot scatter plot is the 3D version of the plot in Panel~(b) of Figure~\ref{fig:weibull}. Once again, we see that $P_m(t)$ is non-decreasing in $m$ for a fixed  $t>0$ and non-decreasing in $t$ for a fixed $m$.
 
To perform the cost-benefit analysis, we choose $R=10\sqrt{m}$, $C_1=6m^{0.2}$, and $C_2=m$.  Then $E_m[NR]$ equals
\begin{align*}
	E_m[NR]&=10\sqrt{m}-m-\frac{6m^{0.2}}{E[Y]}.
\end{align*}
Similar to Example~\ref{example2},  here an optimal $m$ exists.  The red vertical line in Panel~(b) of Figure~\ref{fig:weibull3D} indicates this optimal $m$ equals $25$ at which the expected net revenue achieves its maximum value $24.90$.

\end{example}

\section{Concluding remarks}
This paper investigates the instantaneous and limiting behavior of an $n$-node blockchain under cyber attacks from a single hacker.  Under arbitrary hacking and detecting times, we have derived the formulas for the limiting functional probability, instantaneous functional probability, and mean functional time. We have also established that these quantities are all non-decreasing functions of the number of nodes.  These results substantiate the intuition that increasing the number of nodes can effectively improve the security of a blockchain.  To illustrate our results, we also provided several simulated examples to illustrate our results. 

While blockchain is clearly a ``hot topic'' that attracts researchers from various fields,  it seems that the operations-research-theoretic aspect of blockchains has rarely been explored.  This paper aims to stimulate serious interest along this line of research.  For future research, one interesting direction is to investigate the operations-research-theoretic aspect of the blockchain when it faces cyber attacks from several hackers.


\section*{Appendix}

For the sake of convenience, we first derive some expressions for the components in (\ref{eq:meantime}) and (\ref{eq:instantaneous2}).  We first look at (\ref{eq:meantime}). 
It is clear that
(\ref{eq:detectprob}) implies that
\begin{equation}
\label{eq:probability}
   p_m=P\bigg\{Y>\sum_{i=1}^m X_i\bigg\}=\int_0^{\infty}F_{\sum_{i=1}^m X_i}(s)f_Y(s)ds.   
\end{equation} 
The type of conditional expectations in (\ref{eq:meantime}) appear frequently in statistical reliability theory and actuarial life contingency theory; see, for example, Bieth et al. (2009, 2010) and Hong (2009).  A rigorous treatment can be found in Hong and Sarkar (2013). 
It follows from Equation~(5) in Hong and Sarkar (2013) that
\begin{eqnarray*}
E[Y_1]&=& E\bigg[Y\bigg|Y<\sum_{i=1}^m X_i\bigg]=\int_0^\infty yf_{Y|Y<\sum_{i=1}^m X_i}(y)dy\\
&=&\frac{\int_0^\infty \int_y^\infty yg(y,s)dsdy}{P(Y<\sum_{i=1}^m X_i)}
=\frac{\int_0^\infty \int_y^\infty yf_Y(y)f_{\sum_{i=1}^m X_i}(s)dsdy}{1-\int_0^{\infty}F_{\sum_{i=1}^m X_i}(s)f_Y(s)ds},
\end{eqnarray*}
where, by independence of $Y$ and $\sum_{i=1}^m X_i$, the joint PDF of $Y$ and $\sum_{i=1}^m X_i$  is given by
\begin{align*}
g(y,s)=f_Y(y)f_{\sum_{i=1}^m X_i}(s),\quad y>0, s>0. 
\end{align*}

Hence, 
\[
	E[N_1][Y_1]=\frac{1-p_m}{p_m}E\bigg[Y\bigg|Y<\sum_{i=1}^m X_i\bigg]\\
	=\frac{\int_0^\infty \int_y^\infty yf_Y(y)f_{\sum_{i=1}^m X_i}(s)dsdy}{\int_0^{\infty}F_{\sum_{i=1}^m X_i}(s)f_Y(s)ds}.
\]
Similarly,
\begin{align*}
	E\bigg[\sum_{i=1}^m X_i\bigg|\sum_{i=1}^m X_i<Y\bigg]=\frac{\int_0^\infty \int_s^\infty sf_Y(y)f_{\sum_{i=1}^m X_i}(s)dyds}{\int_0^{\infty}F_{\sum_{i=1}^m X_i}(s)f_Y(s)ds}.
\end{align*}
It follows that
\begin{equation}
\label{eq:Tm}
	E[T_m]=\frac{\int_0^\infty \int_y^\infty yf_Y(y)f_{\sum_{i=1}^m X_i}(s)dsdy+\int_0^\infty \int_s^\infty sf_Y(y)f_{\sum_{i=1}^m X_i}(s)dyds}{\int_0^{\infty}F_{\sum_{i=1}^m X_i}(s)f_Y(s)ds}.
\end{equation}
Next, we shift our attention to the terms in (\ref{eq:instantaneous2}). We have 
\begin{align*}
P\bigg\{t<Y \wedge \sum_{i=1}^{m}X_i\bigg\}&=1-P\bigg\{\sum_{i=1}^{m}X_i\leq t\bigg\}-P\{Y\leq t\}+P\bigg\{\sum_{i=1}^{m}X_i \leq t,Y\leq t\bigg\}.
\end{align*}
Similarly,
\begin{align*}
	P\bigg\{Y \wedge \sum_{i=1}^{m}X_i>t-s\bigg\}=&1-P\bigg\{\sum_{i=1}^{m}X_i\leq t-s\bigg\}-P\{Y\leq t-s\}\\
	&+P\bigg\{\sum_{i=1}^{m}X_i \leq t-s,Y\leq t-s\bigg\}.
\end{align*}
For the term $G(s)=\sum_{i=1}^{\infty} F_{Y_i}(s)$, we have
\begin{equation}
\label{eq:conditionalprobability}
	F_{Y_i}(s)=\frac{P\bigg\{Y \leq s, Y\leq \sum_{i=1}^{m}X_i\bigg\}}{P\bigg\{Y\leq \sum_{i=1}^{m}X_i\bigg\}}=
	\begin{cases}
		1 & \text{if $s >\sum_{i=1}^{m}X_i$},\\
		\frac{P\{Y \leq s\}}{P\{Y\leq \sum_{i=1}^{m}X_i\}} & \text{if $0\leq s \leq\sum_{i=1}^{m}X_i$},\\
		0 & \text{if $s<0$}.
	\end{cases}
\end{equation}
Thus, (\ref{eq:conditionalprobability}) can also be written as
\begin{align*}
	F_{Y_i}(s)=\frac{P(Y \leq s)}{P(Y\leq \sum_{i=1}^{m}X_i)}1_{\{0\leq s \leq\sum_{i=1}^{m}X_i\}}+1_{\{s >\sum_{i=1}^{m}X_i\}},\quad x>0.
\end{align*}

\subsection*{From Example~\ref{example1}}
Let $Y\sim \ex(\delta)$ and  $\sum_{i=1}^m X_i \sim \gam(m, \lambda)$.  Then
 (\ref{eq:probability}) implies
\begin{align*}
	p_m=P\bigg\{Y>\sum_{i=1}^m X_i\bigg\}&=\int_0^\infty \frac{1}{\Gamma(m)}\gamma(m, \lambda s)\delta e^{-\delta s}ds,
\end{align*}
where $\gamma(m, \lambda s)=\int_0^{\lambda s}t^{m-1}e^{-t}dt$ is the lower incomplete gamma function.
By  (\ref{eq:Tm}), we have
\begin{align*}
	E[T_m]&=\frac{\int_0^\infty\int_y^\infty y\delta e^{-\delta y}\frac{\lambda^m}{\Gamma(m)}s^{m-1}e^{-\lambda s}dsdy+\int_0^\infty \int_s^\infty s\delta e^{-\delta y}\frac{\lambda^m}{\Gamma(m)}s^{m-1}e^{-\lambda s}dyds}{\int_0^\infty \frac{1}{\Gamma(m)}\gamma(m, \lambda s)\delta e^{-\delta s}ds}\\
	&=\frac{\lambda^m \bigg(\int_0^\infty \int_y^\infty ye^{-(\delta y+\lambda s)}s^{m-1}dsdy+\int_0^\infty \int_s^\infty e^{-(\delta y+\lambda s)}s^m dyds\bigg)}{\int_0^\infty \gamma(m,\lambda s)e^{-\delta s}ds}.
\end{align*}
It follows from (\ref{eq:instantaneous2}) that
\begin{align*}
	P_m(t)=&1-\frac{1}{\Gamma(m)}\gamma(m, \lambda t)-(1-e^{-\delta t})+\frac{1}{\Gamma(m)}\gamma(m, \lambda t)(1-e^{-\delta t})\\
	&+\int_0^t \left[1-\frac{1}{\Gamma(m)}\gamma(m, \lambda (t-s))-(1-e^{-\delta (t-s)})+\frac{1}{\Gamma(m)}\gamma(m, \lambda (t-s))(1-e^{-\delta (t-s)}) \right]\\
	&d\sum_{i=1}^\infty \bigg\{\frac{1-e^{-\delta s}}{1-\int_0^\infty \frac{1}{\Gamma(m)}\gamma(m, \lambda w)\delta e^{-\delta w}dw}1_{\{0\leq s \leq\sum_{i=1}^{m}X_i\}}+1_{\{s >\sum_{i=1}^{m}X_i\}}\bigg\}\\
	=&e^{-\delta t}-\frac{1}{\Gamma(m)}\gamma(m, \lambda t)+\frac{1}{\Gamma(m)}\gamma(m, \lambda t)(1-e^{-\delta t})\\
	&+\int_0^t \left[ e^{-\delta (t-s)}-\frac{1}{\Gamma(m)}\gamma(m, \lambda (t-s))+\frac{1}{\Gamma(m)}\gamma(m, \lambda (t-s))(1-e^{-\delta (t-s)}) \right]\\
	&d\sum_{i=1}^\infty \bigg\{\frac{1-e^{-\delta s}}{1-\int_0^\infty \frac{1}{\Gamma(m)}\gamma(m, \lambda w)\delta e^{-\delta w}dw}1_{\{0\leq s \leq\sum_{i=1}^{m}X_i\}}+1_{\{s >\sum_{i=1}^{m}X_i\}}\bigg\}.
\end{align*}

\subsection*{From Example~\ref{example2}}
Here $Y\sim \gam(\alpha, \beta)$ and $\sum_{i=1}^m X_i \sim \gam(m\eta, \delta)$. By (\ref{eq:probability}), we have
\begin{align*}
p_m=P\bigg\{Y>\sum_{i=1}^m X_i\bigg\}=\int_0^\infty \frac{1}{\Gamma(m\eta)}\gamma(m\eta,\delta s)\frac{\beta^\alpha}{\Gamma(\alpha)}s^{\alpha-1}e^{-\beta s}ds,
\end{align*}
where $\gamma(m\eta, \delta s)=\int_0^{\delta s}t^{m\eta-1}e^{-t}dt$ is the lower incomplete gamma function.
It follows from (\ref{eq:Tm}) that
\begin{align*}
E[T_m]&=\frac{\int_0^\infty \int_y^\infty y\frac{\beta^\alpha}{\Gamma(\alpha)}y^{\alpha-1}e^{-\beta y}\frac{\delta^{m\eta}}{\Gamma(m\eta)}s^{m\eta-1}e^{-\delta s}dsdy+\int_0^\infty \int_s^\infty s\frac{\beta^\alpha}{\Gamma(\alpha)}y^{\alpha-1}e^{-\beta y}\frac{\delta^{m\eta}}{\Gamma(m\eta)}s^{m\eta-1}e^{-\delta s}dyds }{\int_0^\infty\frac{1}{\Gamma(m\eta)}\gamma(m\eta,\eta s)\frac{\beta^\alpha}{\Gamma(\alpha)}s^{\alpha-1}e^{-\beta s}ds}\\
&=\frac{\delta^{m\eta}\bigg(\int_0^\infty \int_y^\infty y^{\alpha} e^{-(\beta y + \delta s)}s^{m\eta-1}dsdy+\int_0^\infty \int_s^\infty y^{\alpha-1}e^{-(\beta y+\delta s)}s^{m\eta} dyds\bigg )}{\int_0^\infty \gamma(m\eta, \eta s)s^{\alpha -1}e^{-\beta s}ds}.
\end{align*}
Then (\ref{eq:instantaneous2}) implies
\begin{align*}
	P_m(t)=&1-\frac{1}{\Gamma(m\eta)}\gamma(m\eta, \delta t)-\frac{1}{\Gamma(\alpha)}\gamma(\alpha, \beta t)+\frac{1}{\Gamma(m\eta)}\gamma(m\eta, \delta t)\frac{1}{\Gamma(\alpha)}\gamma(\alpha, \beta t)\\
	&+\int_0^t \bigg[ 1-\frac{1}{\Gamma(m\eta)}\gamma(m\eta, \delta (t-s))-\frac{1}{\Gamma(\alpha)}\gamma(\alpha, \beta (t-s))\\
	&+\frac{1}{\Gamma(m\eta)}\gamma(m\eta, \delta (t-s))\frac{1}{\Gamma(\alpha)}\gamma(\alpha, \beta (t-s)) \bigg]\\
	&d\sum_{i=1}^\infty \bigg\{\frac{\frac{1}{\Gamma(\alpha)}\gamma(\alpha, \beta s)}{1-\int_0^\infty \frac{1}{\Gamma(m\eta)}\gamma(m\eta,\delta w)\frac{\beta^\alpha}{\Gamma(\alpha)}w^{\alpha-1}e^{-\beta w}dw}1_{\{0\leq s \leq\sum_{i=1}^{m}X_i\}}+1_{\{s >\sum_{i=1}^{m}X_i\}}\bigg\}.
\end{align*}

\subsection*{From Example~\ref{example3}}
Here $Y\sim \mathsf{Weibull}(\alpha, \beta)$ and $\sum_{i=1}^m X_i \sim \gam(m\eta, \delta)$.  It follows from (\ref{eq:probability}) that
\begin{align*}
P\bigg(Y>\sum_{i=1}^m X_i\bigg)=\int_0^\infty \frac{1}{\Gamma(m\eta)}\gamma(m\eta,\delta s)\frac{\beta}{\alpha}(\frac{s}{\alpha})^{\beta-1}e^{-\left(s/\alpha\right)^\beta}ds.
\end{align*}
By (\ref{eq:Tm}), we have
\begin{align*}
E[T_m]=&\frac{\int_0^\infty \int_y^\infty y\frac{\beta}{\alpha}(\frac{y}{\alpha})^{\beta-1}e^{-\left(y/\alpha\right)^\beta}\frac{\delta^{m\eta}}{\Gamma(m\eta)}s^{m\eta-1}e^{-\delta s}dsdy}{\int_0^\infty \frac{1}{\Gamma(m\eta)}\gamma(m\eta,\eta s)\frac{\beta}{\alpha}(\frac{s}{\alpha})^{\beta-1}e^{-\left(s/\alpha\right)^\beta}ds}\\
&+\frac{\int_0^\infty \int_s^\infty s\frac{\beta}{\alpha}(\frac{y}{\alpha})^{\beta-1}e^{-\left(y/\alpha\right)^\beta}\frac{\delta^{m\eta}}{\Gamma(m\eta)}s^{m\eta-1}e^{-\delta s}dyds }{\int_0^\infty \frac{1}{\Gamma(m\eta)}\gamma(m\eta,\eta s)\frac{\beta}{\alpha}(\frac{s}{\alpha})^{\beta-1}e^{-\left(s/\alpha\right)^\beta}ds}\\
=&\frac{\delta^{m\eta}\bigg(\int_0^\infty \int_y^\infty y^\beta e^{-\left(\left(y/\alpha\right)^\beta+\delta s\right)}s^{m\delta -1}dsdy+\int_0^\infty \int_s^\infty y^{\beta - 1}e^{-\left(\left(y/\alpha\right)^\beta+\delta s\right)}s^{m\delta}dyds\bigg)}{\int_0^\infty \gamma(m\eta, \eta s)s^{\beta - 1}e^{-\left(y/\alpha\right)^\beta}ds}.
\end{align*}
Then (\ref{eq:instantaneous2}) implies
\begin{align*}
	P_m(t)=&1-\frac{1}{\Gamma(m\eta)}\gamma(m\eta, \delta t)-(1-e^{-\left(t/ \alpha\right)^\beta})+\frac{1}{\Gamma(m\eta)}\gamma(m\eta, \delta t)(1-e^{-\left(t/ \alpha\right)^\beta})\\
	&+\int_0^t 1-\frac{1}{\Gamma(m\eta)}\gamma(m\eta, \delta (t-s))-(1-e^{-\left(\left(t-s\right)/ \alpha\right)^\beta})\\
	&+\frac{1}{\Gamma(m\eta)}\gamma(m\eta, \delta (t-s))(1-e^{-\left(\left(t-s\right)/ \alpha\right)^\beta})\\
	&d\sum_{i=1}^\infty \bigg\{\frac{1-e^{-\left(s/ \alpha\right)^\beta}}{1- \int_0^\infty \frac{1}{\Gamma(m\eta)}\gamma(m\eta,\delta w)\frac{\beta}{\alpha}(\frac{w}{\alpha})^{\beta-1}e^{-\left(w/\alpha\right)^\beta}dw}1_{\{0\leq s \leq\sum_{i=1}^{m}X_i\}}+1_{\{s >\sum_{i=1}^{m}X_i\}}\bigg\}\\
	=&e^{-\left(t/ \alpha\right)^\beta}-\frac{1}{\Gamma(m\eta)}\gamma(m\eta, \delta t)+\frac{1}{\Gamma(m\eta)}\gamma(m\eta, \delta t)(1-e^{-\left(t/ \alpha\right)^\beta})\\
	&+\int_0^t e^{-\left(\left(t-s\right)/ \alpha\right)^\beta}-\frac{1}{\Gamma(m\eta)}\gamma(m\eta, \delta (t-s))+\frac{1}{\Gamma(m\eta)}\gamma(m\eta, \delta (t-s))(1-e^{-\left(\left(t-s\right)/ \alpha\right)^\beta})\\
	&d\sum_{i=1}^\infty \bigg\{\frac{1-e^{-\left(s/ \alpha\right)^\beta}}{1- \int_0^\infty \frac{1}{\Gamma(m\eta)}\gamma(m\eta,\delta w)\frac{\beta}{\alpha}(\frac{w}{\alpha})^{\beta-1}e^{-\left(w/\alpha\right)^\beta}dw}1_{\{0\leq s \leq\sum_{i=1}^{m}X_i\}}+1_{\{s >\sum_{i=1}^{m}X_i\}}\bigg\}.
\end{align*}


\section*{References}

\begin{description}

\item{} Bieth, B.,~ Hong, L. ~and Sarkar, J. ~(2010). A standby system with two types of repair
persons.  \emph{Applied Stochastic Models in Business and Industry}~26(5), 577--594.

\item{}Bieth, B.,~ Hong, L. ~and Sarkar, J. ~(2010). A standby system with two repair persons
under arbitrary life- and repair times.  \emph{Mathematical and Computer Modeling}~51(5-6),756--767.

\item{} Cao, L.~(2020). AI in Finance: A Review. \url{https://ssrn.com/abstract=3647625}.

\item{} Casino, F., Dasaklis, T.~K.~and Patsakis, C.~(2019). A systematic literature review of blockchain-based applications: current status, classification and open issues. \emph{Telematics and Informatics}~36, 55--81.

\item{} Diffie, W.~and Hellman, M.E.~(1976).  New directions in cryptography. \emph{IEEE Transactions on Information Theory}~22(5), 644--654. 

\item{} Goldwasser, S.~and Micali, S.~(1982). Probabilistic encryption. \emph{Journal of Computer System and Sciences}~28(2), 270--299. 

\item{} Goldwasser, S.~,Micali, S.~and Rivest, R.L.~(1988). A digital signature scheme secure against adaptive chosen-message attacks. \emph{SIAM Journal on Computing}~18(1), 283--308. 


\item{} Hong, L.~(2009).  Limiting performance of a one-unit system under various repair models. \emph{Ph.D.  thesis}.  Purdue University.  \url{https://docs.lib.purdue.edu/dissertations/AAI3379404/}.

\item{} Hong, L.~and Sarkar, J.~(2013). Contingent means in multi-life models. \emph{Scandinavian Actuarial Journal}~\textbf{5}, 340--351.

\item{} Kallenberg, O.~(2002). \emph{Foundations of Modern Probability}, Second Edition. Springer: New York.

\item{} Khan, M.A.~and Salah, K.~(2017). IoT security: review, blockchain solutions, and open challenges.  \emph{Future Generation Computer Systems.}~82, 395--411.

\item{} Kumar, A~, Abhishek, K.~,  Ghalib, M.R.,  Nerurkar, P., Bhirud, S.~, Alnumay, W., Kumar, S.A., Chatterjee, P.~and Ghosh, U.~(2020).  Securing logistics system and supply chain using Blockchain.\emph{Applied stochastic models in business and industry}~37(3), 413--428. 

\item{} Li, C.~and Zhang, L.J.~(2017). A blockchain based new secure multi-layer network model for internet of things.  \emph{Proceedings--2017 IEEE 2nd Intertional Congress on Internet of Things}, 33--41. 

\item{} Lu, J.~. Wu, S.~, Cheng, H.~, Song, B.~and Xiang, Z.~(2020). Smart contract for electricity transactions and charge settlements using blockchain.\emph{Applied stochastic models in business and industry}~37(3), 442--453.

\item{} Malik, N.,~ Alkhatib, K.,~ Sun, Y.,~ Knight, E.~and Jararweh, Y.~(2021). A comprehensive review of blockchain applications in industrial Internet of Things. \emph{Applied stochastic models in business and industry}~37(3), 391--412.

\item{} Meng, W., Tischhauser, E.~, Wang, Q.~, Wang, Y.~and Han, J.~(2017). When instruction detection meets blockchain technology: a review. \emph{IEEE Access}~6, 10179--10188.

\item{} Risius, M.,~and Spohrer, K.~(2017).  A Blockchain Research Framework: What We (don't) Know, Where We Go from Here, and  How We Will Get There.  \emph{Business $\&$ Information Systems Engineering volume}~59(6), 385--409.

\item{} Rivest, R.L.~, Shamir, A.~and Adelman, L.~(1978). A method for obtaining digital signatures and public-key cryptosystems.  \emph{Communications of the ACM}~21(2), 120--126.


\item{} Salah, K., Rehman, M., Nizamuddin, N.,~and Al-fuqaha, A. Blockchain for AI: Review and Open Research Challenges. \emph{IEEE Access}~10127-10149.

\item{} Tama, B.A.~, Kweka, B.J.,~ Park, Y.~and Rhee, K.H.~(2017).  A critical review of blockchain and its current applications.  \emph{2017 International Conference on Electrical Engineering and Computer Science IEEE}, 109--113. 

\item{} Yang, J.~, Ma, X.~, Crespo,   R. G.~and  Martínez, O.S.~(2020). Blockchain for supply chain performance and logistics management.\emph{Applied stochastic models in business and industry}~37(3), 429--441.

\item{} Zheng, Z.~Xie, S.~, Dai, H.N., and Wang, H.~(2018). Blockchain challenges and opportunities: a survey.  \emph{International Journal of Web and Grid Services}~14(4), 352--375.

\end{description}
\end{document}